\newtheorem{lemma}{Lemma}
\newtheorem{theorem}{Theorem}
\theoremstyle{definition}
\newtheorem{assumption}{Assumption}
\newtheorem{definition}{Definition}
\newtheorem{example}{Example}
\theoremstyle{remark}
\newtheorem*{remark}{Remark}
\DeclareMathOperator{\tr}{tr}
\DeclareMathOperator{\supp}{supp}
\begin{document}

\title{Convergence of eigenstate expectation values with system size}

\begin{CJK}{UTF8}{}

\CJKfamily{gbsn}

\author{Yichen Huang (黄溢辰)\thanks{yichuang@mit.edu}}
\affil{Center for Theoretical Physics, Massachusetts Institute of Technology, Cambridge, Massachusetts 02139, USA}

\maketitle

\end{CJK}

\begin{abstract}

Understanding the asymptotic behavior of physical quantities in the thermodynamic limit is a fundamental problem in statistical mechanics. In this paper, we study how fast the eigenstate expectation values of a local operator converge to a smooth function of energy density as the system size diverges. In translation-invariant quantum lattice systems in any spatial dimension, we prove that for all but a measure zero set of local operators, the deviations of finite-size eigenstate expectation values from the aforementioned smooth function are lower bounded by $1/O(N)$, where $N$ is the system size. The lower bound holds regardless of the integrability or chaoticity of the model, and is saturated in systems satisfying the eigenstate thermalization hypothesis.

\end{abstract}

\section{Introduction}

Many predictions of statistical mechanics require taking the thermodynamic limit, and such results are usually exact or universal only in this limit. Therefore, it is important and fundamental to understand how physical quantities approach their values in the thermodynamic limit as the system size diverges.

In this paper, we study the eigenstate expectation values (EEV) of local operators in translation-invariant (TI) quantum lattice systems. TI allows us to define an infinite sequence of Hamiltonians, one for each system size, from a fixed local term (in the Hamiltonian). However, the thermodynamic limit of EEV is not yet well defined. Since the number of eigenstates grows exponentially with the system size, it is not immediately clear how to naturally define a sequence of eigenstates (one for each system size) in which the convergence of local expectation values is to be studied.

The eigenstate thermalization hypothesis (ETH) \cite{Deu91, Sre94, RDO08, DKPR16, Deu18} postulates that in the thermodynamic limit, the EEV of a local operator converges to a smooth function of energy density. If this is true and such a function is known, we can compute the deviations of finite-size EEV from the values of the function at the same energy density, and analyze how the deviations vanish as the system size diverges. If such a function is unknown (or even the ETH is false), we need to search the space of smooth functions and find the optimal ``target function'' such that the deviations of finite-size EEV from the target function decay as fast as possible in the thermodynamic limit.

In TI systems in any spatial dimension, we prove that for all but a measure zero set of local operators, the deviations of finite-size EEV from the best target function (which depends on the local operator under consideration) are lower bounded by $1/O(N)$, where $N$ is the system size. Note that this result does not assume the ETH. If in the thermodynamic limit the EEV of a local operator does not converge to a smooth function of energy density, then the deviations from any target function do not vanish and the lower bound is trivially valid. In systems satisfying the ETH, we prove that the bound is saturated.

The rest of this paper is organized as follows. Section \ref{sec:pre} sets the stage. In Section \ref{sec:def}, we rigorously define the convergence rate of EEV in the thermodynamic limit. Section \ref{sec:res} presents the main results, whose relationship with the ETH is discussed in Section \ref{sec:eth}. In particular, we explain why our results do not contradict the conventional wisdom that the fluctuations of EEV in chaotic systems are exponentially small in the system size \cite{Sre99, KIH14}. Section \ref{sec:con} concludes the paper. The main text of this paper should be easy to read, for most of the technical details are deferred to Appendices \ref{app}, \ref{app:B}.

\section{Preliminaries} \label{sec:pre}

Throughout this paper, standard asymptotic notations are used extensively. Let $f,g:\mathbb R^+\to\mathbb R^+$ be two functions. One writes $f(x)=O(g(x))$ if and only if there exist constants $M,x_0>0$ such that $f(x)\le Mg(x)$ for all $x>x_0$; $f(x)=\Omega(g(x))$ if and only if there exist constants $M,x_0>0$ such that $f(x)\ge Mg(x)$ for all $x>x_0$; $f(x)=\Theta(g(x))$ if and only if there exist constants $M_1,M_2,x_0>0$ such that $M_1g(x)\le f(x)\le M_2g(x)$ for all $x>x_0$; $f(x)=o(g(x))$ if and only if for any constant $M>0$ there exists a constant $x_0>0$ such that $f(x)<Mg(x)$ for all $x>x_0$; $f(x)=\omega(g(x))$ if and only if for any constant $M>0$ there exists a constant $x_0>0$ such that $f(x)>Mg(x)$ for all $x>x_0$.

For notational simplicity and without loss of generality, we present our results in one dimension. (It is easy to see that the same results hold in higher spatial dimensions.) Consider a chain of $N$ spins so that the dimension of the Hilbert space is $d=d_\textnormal{loc}^N$, where $d_\textnormal{loc}=\Theta(1)$ is the local dimension of each spin. The system is governed by a TI $k$-local Hamiltonian $H$. TI implies periodic boundary conditions, and ``$k$-local'' means that the support of each term in $H$ is contained in a consecutive region of size $k=\Theta(1)$. (For example, a term acting nontrivially only on the first and third spins is $3$- rather than $2$-local.) We say that a term is exactly $k$-local if and only if it is $k$-local but not $(k-1)$-local. Let $\mathbb T$ be the (unitary) lattice translation operator, which acts on the computational basis states as
\begin{equation}
\mathbb T(|x_1\rangle\otimes|x_2\rangle\otimes\cdots\otimes|x_N\rangle)=|x_2\rangle\otimes|x_3\rangle\otimes\cdots\otimes|x_N\rangle\otimes|x_1\rangle
\end{equation}
with $x_l\in\{0,1,\ldots,d_\textnormal{loc}-1\}$ for $l=1,2,\ldots,N$. We write the Hamiltonian as
\begin{equation}
H=\sum_{l=0}^{N-1}H_l,\quad H_l=\mathbb T^{-l}h\mathbb T^l,
\end{equation}
where $h$ is a Hermitian operator acting on the first $k$ spins. Assume without loss of generality that $\tr h=0$ (traceless) and $\|h\|=1$ (unit operator norm).

\begin{lemma} \label{l:1}
For any traceless $k'$-local operator $A$, both $\tr(HA)/d$ and $\tr(H^2A)/d$ are $N$-independent constants for $N\ge k+k'-1$ and $N\ge2k+k'-2$, respectively. Furthermore, $\tr(Hh)/d$ is an $N$-independent positive constant for $N\ge2k-1$.
\end{lemma}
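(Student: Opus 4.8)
The plan is to reduce every normalized trace of a product of local operators to a trace over the union of the operators' supports, which is manifestly $N$-independent once the ring is large enough that this union embeds without wrapping around. Concretely, if an operator $X$ is supported on a set $S\subseteq\{1,\ldots,N\}$ that does not wrap around the ring, then $\tr(X)/d=\tr_S(X|_S)/d_{\rm loc}^{|S|}$, and the support of a product is contained in the union of the factors' supports. The tracelessness $\tr h=0$ (and $\tr A=0$) will force most terms to vanish, leaving only a fixed, $N$-independent finite collection of ``overlapping'' terms whose values depend solely on the relative positions of the supports.

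For $\tr(HA)/d$ I would expand $H=\sum_l H_l$ and use translation invariance together with cyclicity of the trace to write $\tr(HA)/d=\sum_l \tr(h\,A_{-l})/d$ with $A_{-l}=\mathbb T^{-l}A\mathbb T^{l}$. Since $\tr h=0$, a term vanishes unless $\supp(A_{-l})$ overlaps $\supp(h)=\{1,\ldots,k\}$; the surviving terms are exactly those for which the connected union $\supp(h)\cup\supp(A_{-l})$ is an interval of length at most $k+k'-1$. For $N\ge k+k'-1$ this union embeds faithfully for every surviving $l$, so both the set of surviving $l$ and the value of each term coincide with the infinite-chain computation and are independent of $N$.

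The case $\tr(H^2A)/d$ is the main work and the step I expect to be most delicate. Expanding $H^2=\sum_{l,l'}H_lH_{l'}$ and again using translation invariance gives $\tr(H^2A)/d=\sum_{m,l}\tr(h\,H'_m\,A_{-l})/d$, where $H'_m=\mathbb T^m h\mathbb T^{-m}$. The key observation is that a term vanishes whenever the total support $\supp(h)\cup\supp(H'_m)\cup\supp(A_{-l})$ splits into more than one connected component: any one of $h$, $H'_m$, $A_{-l}$ that is isolated contributes a factor $\tr h=0$ or $\tr A=0$, and a checking of the possible two-block partitions shows each leaves some operator isolated. Hence every surviving term has its three factors chained into a single interval, whose length is maximized (minimal pairwise overlaps) at $k+(k'-1)+(k-1)=2k+k'-2$. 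For $N\ge 2k+k'-2$ this interval embeds faithfully, the set of surviving $(m,l)$ stabilizes, and each term equals its $N$-independent infinite-chain value; summing gives the claim. The obstacle here is the careful verification that below this threshold no wraparound either creates spurious surviving terms or identifies distinct ones, whereas at or above it the local neighborhood of every relevant cluster is that of the infinite chain.

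Finally, for $\tr(Hh)/d$ I would first note that it is the special case $A=h$, $k'=k$ of the first statement, hence $N$-independent for $N\ge 2k-1$. For positivity I would exploit translation invariance more forcefully: since $\mathbb T^{-l}H\mathbb T^{l}=H$, each $\tr(HH_l)/d$ equals $\tr(Hh)/d$, and summing over $l$ yields the identity $\tr(Hh)/d=\tr(H^2)/(Nd)$. As $H$ is Hermitian, $\tr(H^2)=\sum_j E_j^2\ge 0$, with equality only if $H=0$; for any genuine (nonzero) Hamiltonian this gives $\tr(Hh)/d>0$.
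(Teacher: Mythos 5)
Your proposal follows the same route as the paper's proof: tracelessness kills every term whose factors can be separated, the surviving terms are confined to an $O(1)$-size window, and once $N$ is at least the window size the surviving set and its values coincide with the infinite-chain computation; your positivity argument via $\tr(Hh)/d=\tr(H^2)/(Nd)$ is also exactly the paper's (and you are right to flag that $H\neq0$ is needed --- e.g.\ $h\propto\sigma_1^z-\sigma_2^z$ gives $H=0$ identically, a degenerate case the paper silently excludes). However, one step in your treatment of $\tr(H^2A)$ is false as stated. You claim that a term $\tr(h\,H'_m\,A_{-l})$ vanishes whenever the union $\supp(h)\cup\supp(H'_m)\cup\supp(A_{-l})$ has more than one connected component, on the grounds that every two-block partition isolates one of the three operators. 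This fails because a $k'$-local operator need not have connected support: the paper is explicit that $k'$-local only means the support is \emph{contained in} an interval of length $k'$. Counterexample: take $k=2$, the translates $H_0=\sigma_1^z\sigma_2^z$ and $H_3=\sigma_4^z\sigma_5^z$, and the traceless $5$-local operator $A=\sigma_1^z\sigma_2^z\sigma_4^z\sigma_5^z$. The union of supports is $\{1,2\}\cup\{4,5\}$, which is disconnected, yet $H_0H_3A=I$, so $\tr(H_0H_3A)/d=1\neq0$; no operator is isolated because $A$ straddles both components. Your criterion would discard such terms as zero, so what your argument shows to be $N$-independent is a strict sub-sum of $\tr(H^2A)/d$, not the quantity itself. (The same conflation appears in your $\tr(HA)$ step --- the union need not literally be an interval --- but with only two operators it is harmless.)

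The fix is the criterion the paper encodes in Eq.~(\ref{eq:2nd}): the relevant partition is of the three \emph{operators}, not of the lattice. A term vanishes whenever one of $h$, $H'_m$, $A_{-l}$ has support disjoint from the union of the other two, since the trace then factorizes and the isolated operator is traceless; the potentially surviving terms are those whose overlap structure is connected as a cluster. This condition is weaker than lattice connectivity of the union, but it still yields your window bound: chaining the \emph{enclosing intervals} (lengths $k$, $k$, $k'$, each consecutive pair in the chain overlapping in at least one site) confines every surviving term to an interval of length at most $2k+k'-2$ --- your count $k+(k'-1)+(k-1)$ is correct once applied to enclosing intervals rather than to supports. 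With this one correction, your embedding argument, the stated thresholds, and the positivity claim all go through and match the paper.
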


\begin{proof}
Let $\supp\cdots$ be the support of a local operator. Since $\tr H_l=0$ for all $l$ and $\tr A=0$,
\begin{gather}
\frac1d\tr(HA)=\frac1d\sum_{\supp H_l\cap\supp A\neq\emptyset}\tr(H_lA),\label{eq:1st}\\
\frac1d\tr(H^2A)=\frac1d\sum_{\substack{((\supp H_{l_1}\cap\supp A\neq\emptyset)\land(\supp H_{l_2}\cap(\supp H_{l_1}\cup\supp A)\neq\emptyset))\\\lor((\supp H_{l_2}\cap\supp A\neq\emptyset)\land(\supp H_{l_1}\cap(\supp H_{l_2}\cup\supp A)\neq\emptyset))}}\tr(H_{l_1}H_{l_2}A).\label{eq:2nd}
\end{gather}
It is easy to see that the right-hand sides of Eqs. (\ref{eq:1st}), (\ref{eq:2nd}) do not depend on $N$ for $N\ge k+k'-1$ and $N\ge2k+k'-2$, respectively. Due to TI, $\tr(Hh)/d=\tr(H^2)/(Nd)>0$.
\end{proof}

Since we are interested in the thermodynamic limit, hereafter we only consider sufficiently large $N$ such that conditions like $N\ge2k+k'-2$ are satisfied.

\section{Definitions} \label{sec:def}

Let $\{|j\rangle\}_{j=1}^d$ be a complete set of TI eigenstates of $H$ with corresponding energies $\{E_j\}$. Note that both $|j\rangle$ and $E_j$ depend on the system size and should carry $N$ as a subscript, which is omitted for notational simplicity.

\begin{definition} [Convergence rate of eigenstate expectation values] \label{def}
For a traceless local operator $A$ with $\|A\|=1$, let $f:[-1,1]\to\{z\in\mathbb C:|z|\le1\}$ be an $N$-independent function and define
\begin{equation} \label{asmpeq}
r_f(N)=\sqrt{\frac1d\sum_j\big|\langle j|A|j\rangle-f(E_j/N)\big|^2},\quad R_f(N)=\sup_{n\ge N}r_f(n)\ge r_f(N),
\end{equation}
where $f$ is smooth in the sense of having a Taylor expansion to second order around $x=0$:
\begin{equation}
f(x)=f(0)+f'(0)x+f''(0)x^2/2+O(x^3).
\end{equation}
If there exists an optimal $\hat f$ such that $R_{\hat f}(N)=O(R_f(N))$ for any (other) smooth function $f$, then the decay of $R_{\hat f}(N)$ gives the (average) convergence rate of the EEV of $A$ in the thermodynamic limit $N\to+\infty$.
\end{definition}

Note that $r_f(N)$, $R_f(N)$, and $\hat f$ (if exists) all depend on the local operator under consideration and should carry $A$ as a subscript, which is omitted for notational simplicity.

\begin{example} \label{exa}
In the special case where $A=h$ is a term in the Hamiltonian, we trivially have $\hat f(x)=x$ and $R_{\hat f}(N)=0$ for any $N$.
\end{example}

For any traceless local operator $A$ in any TI system, the weak ETH \cite{BKL10, Mor16, BCSB19} (Lemma \ref{weaketh}) implies that $R_{f(x)=0}(N)=O(1/\sqrt N)$. If an optimal $\hat f$ exists, then we obtain an upper bound $R_{\hat f}(N)=O(1/\sqrt N)$.

\begin{lemma} [\cite{KLW15, HH19}] \label{weaketh}
For any traceless local operator $A$ with bounded norm $\|A\|=O(1)$,
\begin{equation}
\frac1d\sum_j\big|\langle j|A|j\rangle\big|^2=O(1/N).
\end{equation}
\end{lemma}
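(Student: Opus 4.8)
The plan is to exploit translation invariance to replace $A$ by its translation average and then reduce the diagonal sum to a (basis-independent) Hilbert--Schmidt norm that can be evaluated by the same bookkeeping as in Lemma \ref{l:1}. First, since $[H,\mathbb T]=0$, I would take the complete set $\{|j\rangle\}$ to consist of simultaneous eigenstates of $H$ and $\mathbb T$, say $\mathbb T|j\rangle=e^{i\phi_j}|j\rangle$. Then $\langle j|\mathbb T^l=e^{il\phi_j}\langle j|$ and $\mathbb T^{-l}|j\rangle=e^{-il\phi_j}|j\rangle$, so the phases cancel and $\langle j|\mathbb T^lA\mathbb T^{-l}|j\rangle=\langle j|A|j\rangle$ for every $l$. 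Hence the diagonal matrix element is unchanged if $A$ is replaced by its translation average $\bar A:=\frac1N\sum_{l=0}^{N-1}\mathbb T^lA\mathbb T^{-l}$, giving $\frac1d\sum_j|\langle j|A|j\rangle|^2=\frac1d\sum_j|\langle j|\bar A|j\rangle|^2$.

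Next I would bound the diagonal sum by the full sum of squared matrix elements, which is independent of the orthonormal basis: $\sum_j|\langle j|\bar A|j\rangle|^2\le\sum_{j,j'}|\langle j|\bar A|j'\rangle|^2=\tr(\bar A^\dagger\bar A)$. Expanding the average,
\[
\frac1d\tr(\bar A^\dagger\bar A)=\frac1{N^2}\sum_{l,l'=0}^{N-1}\frac1d\tr\big((\mathbb T^lA\mathbb T^{-l})^\dagger\,\mathbb T^{l'}A\mathbb T^{-l'}\big).
\]
By translation invariance of the normalized trace each summand depends only on $l-l'$. Crucially, because $A$ (and hence $A^\dagger$) is traceless, the summand vanishes whenever the supports of the two translates are disjoint: a product of operators on disjoint regions factorizes under the normalized trace into the product of the individual normalized traces, each of which is zero. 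Since $A$ is $k'$-local, for each $l$ only $O(1)$ values of $l'$ produce overlapping supports, so at most $O(N)$ of the $N^2$ summands survive, and each is bounded in magnitude by $\|A\|^2=O(1)$ (via $|\tr(B^\dagger C)/d|\le\|B\|\,\|C\|$). Therefore $\frac1d\tr(\bar A^\dagger\bar A)=O(N)/N^2=O(1/N)$, which is the claim.

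The main obstacle, and the only place the $O(1/N)$ rather than a trivial $O(1)$ is gained, is the tracelessness-driven cancellation of the disjoint-support terms; this is exactly the mechanism that makes the right-hand sides of Eqs. (\ref{eq:1st}), (\ref{eq:2nd}) finite sums, so the combinatorics parallels Lemma \ref{l:1}. The one point to verify with care is the reduction $\langle j|A|j\rangle=\langle j|\bar A|j\rangle$ in the presence of $H$-degeneracies: the identity holds for \emph{any} common eigenbasis of $H$ and $\mathbb T$ because the translation phase is fixed within each joint eigenspace, and the inequality $\sum_j|\langle j|\bar A|j\rangle|^2\le\tr(\bar A^\dagger\bar A)$ holds for any orthonormal basis, so the bound is insensitive to the choice of eigenbasis.
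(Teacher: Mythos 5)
Your proposal is correct and follows essentially the same route as the paper's proof: translation-average $A$, bound the diagonal sum by $\tr(\mathbb A^\dag\mathbb A)$, and show only $O(N)$ of the $N^2$ cross terms survive, each of normalized size $O(1)$. The only cosmetic difference is in the final counting step, where the paper invokes orthogonality of generalized Pauli strings while you use tracelessness plus factorization of the normalized trace over disjoint supports---the same combinatorial mechanism expressed in a basis-free way.
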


\begin{proof}
We include a proof for completeness. Let
\begin{equation} \label{TI}
\mathbb A:=\frac1N\sum_{l=0}^{N-1}\mathbb T^{-l}A\mathbb T^l
\end{equation}
so that $\langle j|A|j\rangle=\langle j|\mathbb A|j\rangle$ due to TI. Hence,
\begin{equation}
\sum_j\big|\langle j|A|j\rangle\big|^2=\sum_j\langle j|\mathbb A^\dag|j\rangle\langle j|\mathbb A|j\rangle\le\sum_{j,k}\langle j|\mathbb A^\dag|k\rangle\langle k|\mathbb A|j\rangle=\sum_j\langle j|\mathbb A^\dag\mathbb A|j\rangle=\tr(\mathbb A^\dag\mathbb A).
\end{equation}
Expanding $\mathbb A$ in the generalized Pauli basis, we count the number of terms that do not vanish upon taking the trace in the expansion of $\mathbb A^\dag\mathbb A$. There are $O(N)$ such terms, the trace of each of which is $O(d/N^2)$. Therefore, $\tr(\mathbb A^\dag\mathbb A)/d=O(1/N)$.
\end{proof}

\section{Results} \label{sec:res}

We prove the following lemma in Appendix \ref{app}.

\begin{lemma} \label{thm}
For a traceless local operator $A$ with $\|A\|=1$, if there exist a smooth function $f$ and a strictly increasing infinite sequence $\{N_i\}$ of positive integers such that $r_f(N)=o(1/N)$ for $N\in\{N_i\}$, then
\begin{equation} \label{wrong}
\tr(Hh)\tr(H^2A)/d^2=\tr(H^2h)\tr(HA)/d^2.
\end{equation}
Note that both sides of this equation are $N$-independent constants (Lemma \ref{l:1}).
\end{lemma}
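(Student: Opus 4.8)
The plan is to extract the algebraic identity~\eqref{wrong} by computing the three lowest ``energy moments'' of the diagonal matrix elements of $A$ and matching them against the corresponding moments of the target function $f$. Since $\langle j|H^m=E_j^m\langle j|$, one has the exact identities
\begin{equation}
a_m:=\frac1d\tr(H^mA)=\frac1d\sum_j E_j^m\langle j|A|j\rangle,\qquad m=0,1,2,
\end{equation}
with $a_0=\tr(A)/d=0$, and by Lemma~\ref{l:1} the numbers $a_1$, $a_2$, $c_0:=\tr(Hh)/d>0$ and $c_1:=\tr(H^2h)/d$ are all $N$-independent constants. Translation invariance further gives $\mu_2:=\tr(H^2)/d=Nc_0$ and $\mu_3:=\tr(H^3)/d=Nc_1$, while $\mu_0=1$ and $\mu_1=0$. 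The identity~\eqref{wrong} is exactly the statement $c_0a_2=c_1a_1$, so it suffices to establish the two relations $a_1=f'(0)c_0$ and $a_2=f'(0)c_1$.

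I would then write $\langle j|A|j\rangle=f(E_j/N)+\epsilon_j$ with $\frac1d\sum_j|\epsilon_j|^2=r_f(N)^2$, substitute into each $a_m$, and Taylor-expand $f(x)=f(0)+f'(0)x+\tfrac12f''(0)x^2+R(x)$ with $|R(x)|\le C|x|^3$ holding uniformly on $[-1,1]$ by smoothness of $f$ on this compact interval. Two error types arise and must both be shown negligible: the $\epsilon_j$-contribution, bounded by Cauchy--Schwarz as $\big|\tfrac1d\sum_j E_j^m\epsilon_j\big|\le\sqrt{\mu_{2m}}\,r_f(N)$, and the remainder $\tfrac1d\sum_j E_j^m R(E_j/N)$, bounded by $\tfrac{C}{N^3}\cdot\tfrac1d\sum_j|E_j|^{m+3}$. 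Controlling these requires the moment-growth estimates $\mu_{2m}=\tr(H^{2m})/d=O(N^m)$, whence $\tfrac1d\sum_j|E_j|^p=O(N^{p/2})$ follows by Cauchy--Schwarz. I expect this to be the main technical obstacle: it is proved by expanding $H^{2m}=(\sum_l H_l)^{2m}$ and noting that, since each $H_l$ is traceless, $k$-local, and of unit norm, a summand $\tr(H_{l_1}\cdots H_{l_{2m}})/d$ survives only when the overlap graph of the supports has no component consisting of a single index (an isolated traceless factor traces out to zero), forcing the $2m$ indices to cluster into at most $m$ groups; there are $O(N^m)$ such index tuples, each contributing $O(1)$, so $\mu_{2m}=O(N^m)$. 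This is the familiar fact that the cumulants of the extensive observable $H$ are themselves extensive.

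With these bounds in hand, evaluating the three moments along the sequence $\{N_i\}$ on which $r_f(N)=o(1/N)$ yields, for $m=0$, the relation $0=f(0)+\tfrac{f''(0)c_0}{2N}+o(1/N)$; letting $N\to\infty$ forces $f(0)=0$, and then multiplying the surviving remainder by $N$ forces $f''(0)c_0/2=0$, i.e.\ $f''(0)=0$ since $c_0>0$. For $m=1$ the leading term is $\tfrac{f'(0)}{N}\mu_2=f'(0)c_0$ with all errors $o(1)$, giving $a_1=f'(0)c_0$; for $m=2$, the vanishing of $f(0)$ and $f''(0)$ removes the dangerous $\mu_2$ and $\mu_4$ terms, leaving $\tfrac{f'(0)}{N}\mu_3=f'(0)c_1$ with errors $o(1)$, giving $a_2=f'(0)c_1$. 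In each case the left-hand side and the leading term are $N$-independent constants while the remainder tends to zero along $\{N_i\}$, so the $o(1)$ can be dropped and the relations hold exactly. Eliminating $f'(0)$ yields $c_0a_2=c_1a_1$, which is precisely~\eqref{wrong}; the degenerate case $f'(0)=0$ simply makes both sides vanish.
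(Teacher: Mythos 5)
Your proposal is correct, and its skeleton is the same as the paper's: match the moments $\frac{1}{d}\sum_j E_j^m\langle j|A|j\rangle$ for $m=0,1,2$ against the Taylor expansion of $f$, deduce $f(0)=0$ and $f''(0)=0$ from the $m=0$ moment, identify $f'(0)=\tr(HA)/\tr(Hh)$ from $m=1$, and read Eq.~(\ref{wrong}) off the $m=2$ moment; your clustering proof of $\tr(H^{2m})/d=O(N^m)$ is essentially the paper's Lemma~\ref{l:moment}. Where you genuinely diverge is in the error control. The paper applies the Taylor bound only for $|E_j|<\Lambda=C\sqrt{N\log N}$, which keeps $|E_j|/N$ small so that a purely local remainder bound suffices, but it must then invoke an eigenvalue-concentration estimate (Lemma~\ref{Mar}) to show that the tail $|E_j|\ge\Lambda$ contributes only $1/\poly(N)$. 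You instead use a cubic remainder bound valid on all of $[-1,1]$ and control everything by moment estimates alone (up to $\tr(H^{10})/d=O(N^5)$), thereby dispensing with Lemma~\ref{Mar} and the $\Lambda$, $q$ bookkeeping entirely; this is a mild but real simplification. One caveat: your justification of the global bound, ``smoothness of $f$ on this compact interval,'' is not what the paper grants---Definition~\ref{def} only assumes a Taylor expansion at $x=0$, so $|R(x)|\le C|x|^3$ is guaranteed only for $|x|\le\delta$ for some $\delta>0$, and $f$ need not be $C^3$ (or even continuous) away from the origin. The gap is easily patched: $f$ maps into the unit disk and $f'(0)$, $f''(0)$ are fixed finite numbers, so $|R(x)|\le 2+|f'(0)|+|f''(0)|/2$ everywhere, and hence $|R(x)|\le C'|x|^3$ on $\delta\le|x|\le1$ with $C'$ independent of $N$ (because $f$ is $N$-independent). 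With that one-line fix your argument is complete.
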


We need to define a measure on the set of local operators or parameterize a local operator by real numbers. TI allows us to define canonical local operators, which not only are representatives of all local operators but also form a vector space. Expanding a traceless local operator $A$ in the generalized Pauli basis, we say that $A$ is canonical if and only if all Pauli string operators (with non-zero coefficients) in the expansion start from the first site. For any $A$, there is a unique canonical traceless local operator $B$, called the canonical form of $A$, such that $\langle\psi|A|\psi\rangle=\langle\psi|B|\psi\rangle$ for any TI state $|\psi\rangle$. For example, in a spin-$1/2$ chain $\sigma_2^z\sigma_3^z+\sigma_2^x+\sigma_3^x$ is not canonical, and its canonical form is $\sigma_1^z\sigma_2^z+2\sigma_1^x$, where $\sigma_l^x, \sigma_l^z$ are the Pauli matrices at site $l$. With TI, we may without loss of generality only consider the EEV of canonical traceless local operators.

The expansion of a general canonical traceless $k'$-local operator in the generalized Pauli basis has $(d_\textnormal{loc}^2-1)$ exactly $1$-local terms and $(d_\textnormal{loc}^2-1)^2d_\textnormal{loc}^{2\kappa-4}$ exactly $\kappa$-local terms for $\kappa=2,3,\ldots,k'$. The coefficients of the expansion parameterize a canonical traceless $k'$-local operator. Thus, we have defined a parameter space $S$ of dimension
\begin{equation}
d_\textnormal{loc}^2-1+\sum_{\kappa=2}^{k'}(d_\textnormal{loc}^2-1)^2d_\textnormal{loc}^{2\kappa-4}=(d_\textnormal{loc}^2-1)d_\textnormal{loc}^{2k'-2}
\end{equation}
such that points in $S$ are in one-to-one correspondence with canonical traceless $k'$-local operators.

\begin{theorem} \label{cor}
We say that a canonical traceless local operator $A$ is ``rapidly converging'' if there exists a smooth function $f$ (which depends on $A$) such that
\begin{equation} \label{eq:thm}
r_f(N)=1/O(N)
\end{equation}
does not hold. The set of rapidly converging canonical traceless local operators has measure zero.
\end{theorem}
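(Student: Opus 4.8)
The plan is to trap every rapidly converging operator inside the zero set of a single nontrivial linear equation on the parameter space $S$, and then invoke the elementary fact that a proper linear subspace of a finite-dimensional real vector space has Lebesgue measure zero.

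First I would unpack the asymptotic notation in the definition of ``rapidly converging.'' Saying that $r_f(N)=1/O(N)$ \emph{fails} means exactly that $r_f(N)=\Omega(1/N)$ fails, i.e.\ for every constant $c>0$ there are infinitely many $N$ with $Nr_f(N)<c$. A diagonal choice ($c=1/m$, $m=1,2,\dots$) then produces a strictly increasing sequence $\{N_i\}$ along which $r_f(N_i)=o(1/N_i)$. Hence $A$ being rapidly converging is precisely the hypothesis of Lemma \ref{thm}, which forces
\begin{equation*}
\tr(Hh)\,\tr(H^2A)-\tr(H^2h)\,\tr(HA)=0.
\end{equation*}

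The structural point is that the left-hand side is a linear functional of $A$: both $A\mapsto\tr(HA)/d$ and $A\mapsto\tr(H^2A)/d$ are linear, and by Lemma \ref{l:1} the prefactors $\tr(Hh)/d$ and $\tr(H^2h)/d$ are fixed $N$-independent constants, with $\tr(Hh)/d>0$. Writing $\phi(A)=\tr(Hh)\tr(H^2A)/d^2-\tr(H^2h)\tr(HA)/d^2$, the entire set of rapidly converging operators lies in $\ker\phi\subseteq S$. If $\phi$ is not the zero functional, then $\ker\phi$ is a hyperplane in the finite-dimensional space $S$ and hence has measure zero, proving the theorem.

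The main obstacle is therefore to establish $\phi\not\equiv0$, i.e.\ to exhibit one canonical traceless $k'$-local $A$ violating the displayed identity. Since $\tr(Hh)/d>0$, it is enough to find $A$ with $\tr(HA)=0$ but $\tr(H^2A)\neq0$. Because $A\mapsto\tr(H^mA)/d$ reads off (up to conjugation and normalization) the Pauli coefficients of $H^m$ on the strings appearing in $A$, I would hunt for a Pauli string $Q$ whose coefficient in $H^2$ is nonzero while its coefficient in $H$ vanishes: a product $H_lH_{l'}$ of two overlapping local terms generates strings of locality strictly larger than the interaction range of $H$, and $H$ has no such components, so taking $A=Q$ (or its canonical form) gives $\phi(A)=\tr(Hh)\tr(H^2Q)/d^2\neq0$. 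I expect the genuinely delicate point to be the degenerate regime in which $k'$ is smaller than the interaction range, or in which the low-locality parts of $H$ and $H^2$ happen to be proportional (e.g.\ classical or noninteracting $h$); there $\phi$ can vanish identically on $S$ and the containment in $\ker\phi$ becomes vacuous, so one must instead argue measure zero directly from the fact that a generic such $A$ has an EEV that fails to concentrate onto any smooth function of the energy density, whence $r_f(N)=\Omega(1)$ and $A$ is not rapidly converging.
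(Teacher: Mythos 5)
Your reduction is exactly the paper's: negating $r_f(N)=1/O(N)$ to extract a strictly increasing sequence along which $r_f(N_i)=o(1/N_i)$, invoking Lemma \ref{thm}, and observing that every rapidly converging operator then lies in the kernel of the $N$-independent (Lemma \ref{l:1}) linear functional $\phi(A)=\big[\tr(Hh)\tr(H^2A)-\tr(H^2h)\tr(HA)\big]/d^2$, so that measure zero follows once $\phi\not\equiv0$. The genuine gap is in the one step that carries all the technical weight: producing a single $A$ with $\tr(HA)=0$ but $\tr(H^2A)\neq0$. Your plan reads off a Pauli string of locality $>k$ from one product $H_lH_{l'}$ of \emph{overlapping} terms, but a nonzero coefficient in one product need not survive in $H^2$: writing $H^2=\sum_lH_l^2+\sum_{l_1<l_2}\{H_{l_1},H_{l_2}\}$, each cross pair contributes only its anticommutator, and two Pauli strings that anticommute have vanishing anticommutator. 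Concretely, for $h=\sigma^x_1\sigma^z_2$ (so $k=2$) one has $\{H_l,H_{l+1}\}=0$ for every $l$, so $H^2$ contains no $3$-local string at all, even though each individual product $H_lH_{l+1}$ does. Controlling precisely this kind of cancellation is the actual content of the paper's proof: it takes a pair with \emph{disjoint} supports, $H_0H_{2k-1}$ (disjointness forces commutation, so nothing cancels within the pair), selects from it an exactly $(3k-1)$-local string $A$ whose support contains sites $1,k,2k,3k-1$, and then eliminates interference from all other pairs via the decomposition $H^2=G_1+G_2+G_3$: pairs at separation $|\Delta|\le2k-2$ produce only $(3k-2)$-local strings, which cannot match the exactly $(3k-1)$-local $A$, while pairs at separation $\Delta\ge2k$ produce only strings whose support has a gap of at least $k$ spins, which the support of $A$ does not have. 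Without an argument of this type you have not shown $\phi\not\equiv0$, and the containment in $\ker\phi$ yields nothing.

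Your fallback for the ``degenerate regime'' is also not viable as stated: the claim that a generic $A$ would then satisfy $r_f(N)=\Omega(1)$ for every smooth $f$ is not a fact one can appeal to---it is a strengthening of the very theorem being proved, and it is false in general, since in any system satisfying Assumption \ref{ass} Theorem \ref{thm:eth} gives $r_f(N)=O(1/N)=o(1)$ for every local operator. Fortunately, the regime you worry about is narrower than you fear: for any canonical, exactly $k$-local $h$---classical, noninteracting, or otherwise---the disjoint-pair construction above always yields a valid $A$, so the low-locality parts of $H$ and $H^2$ being proportional is never an obstruction. The only real caveat is that this $A$ is $(3k-1)$-local, so the hyperplane argument runs in the parameter space $S$ of $k'$-local operators with $k'\ge3k-1$; for smaller $k'$, the canonical traceless $k'$-local operators already form a proper linear subspace of that $S$ and hence a measure-zero set there, so nothing further is needed under the theorem's intended reading.
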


\begin{proof}
It suffices to prove that the set of canonical traceless local operators that satisfy Eq. (\ref{wrong}) has measure zero. Since both sides of Eq. (\ref{wrong}) are linear functions of $A$, it further suffices to find a particular $A$ such that Eq. (\ref{wrong}) does not hold.

Assume without loss of generality that $h$ is canonical and exactly $k$-local. We write $H^2=G_1+G_2+G_3$, where
\begin{equation}
G_1:=2\sum_{l=0}^{N-1}H_lH_{l+2k-1},\quad G_2:=\sum_{l=0}^{N-1}\sum_{\Delta=2-2k}^{2k-2}H_lH_{l+\Delta},\quad G_3:=\sum_{l=0}^{N-1}\sum_{\Delta=2k}^{N-2k}H_lH_{l+\Delta}.
\end{equation}
Expanding $G_1$ in the generalized Pauli basis, there is an exactly $(3k-1)$-local term (coming from $H_0H_{2k-1}$) whose support contains spins at positions $1,k,2k,3k-1$. Define $A$ as this term so that $\tr(G_1A)\neq0$. Moreover, $\tr(G_2A)=0$ because all terms in $G_2$ are $(3k-2)$-local. The support of each term in $G_3$ has a gap of $\ge k$ spins. The support of $A$ does not have such a gap. Hence, $\tr(G_3A)=0$, and
\begin{equation}
\tr(H^2A)=\tr(G_1A)+\tr(G_2A)+\tr(G_3A)\neq0.
\end{equation}
Since $\tr(Hh)>0$ (Lemma \ref{l:1}), the left-hand side of Eq. (\ref{wrong}) is non-zero. We complete the proof by noting that $\tr(HA)=0$ because all terms in $H$ are $k$-local.
\end{proof}

\begin{remark}
If, instead of Eq. (\ref{asmpeq}), $r_f(N)$ is defined as
\begin{equation}
r_f(N)=\frac1d\sum_j\big|\langle j|A|j\rangle-f(E_j/N)\big|,
\end{equation}
then the statement of Theorem \ref{cor} remains valid upon changing Eq. (\ref{eq:thm}) to $r_f(N)=1/O(N\log N)$. This can be proved in almost the same way. The difference comes from the observation that ``$O(Nr_f(N))$'' in Eq. (\ref{2derivc}) should be modified to $\Lambda^2r_f(N)$.
\end{remark}

\section{Eigenstate thermalization} \label{sec:eth}

The (strong) ETH postulates that the diagonal matrix elements of a local operator $A$ in the energy eigenbasis take the form \cite{Sre99}
\begin{equation} \label{eq:ETH}
\langle j|A|j\rangle=g(E_j/N)+e^{-S(E_j)/2}\delta_j,
\end{equation}
where $g(\cdots)$ is a smooth function of its argument, $S(E)$ is the thermodynamic entropy (logarithm of the density of states) at energy $E$, and $\delta_j=O(1)$ varies erratically with $j$.

Since the thermodynamic entropy is extensive, by comparing Eqs. (\ref{asmpeq}), (\ref{eq:ETH}) one might argue that $R_g(N)=e^{-\Theta(N)}$, which contradicts Theorem \ref{cor}. However, this argument is problematic because $g$ may depend on $N$. Indeed, $g$ should carry $N$ as a subscript, and Theorem \ref{cor} states that for generic local operators, $g_N$ cannot converge too fast in the thermodynamic limit $N\to+\infty$.

Interestingly, the ETH for eigenstates in the middle of the energy spectrum implies that the bound (\ref{eq:thm}) is tight.

\begin{assumption} [eigenstate thermalization hypothesis in the middle of the spectrum] \label{ass}
Let $\epsilon$ be an arbitrarily small positive constant. For any traceless local operator $A$ with $\|A\|=1$, there is a sequence of functions $\{g_N:[-\epsilon,\epsilon]\to\{z\in\mathbb C:|z|\le1\}\}$ (one for each system size $N$) such that
\begin{equation} \label{eq:ass}
\big|\langle j|A|j\rangle-g_N(E_j/N)\big|\le1/\poly(N)
\end{equation}
for all $j$ with $|E_j|\le N\epsilon$, where $\poly(N)$ denotes a polynomial of sufficiently high degree in $N$. We assume that each $g_N(x)$ is smooth in the sense of having a Taylor expansion to some low order around $x=0$.
\end{assumption}
While the ETH ansatz (\ref{eq:ETH}) implies that the right-hand side of inequality (\ref{eq:ass}) can be improved to $e^{-\Omega(n)}$, a (much weaker) inverse polynomial upper bound suffices for our purposes.

\begin{theorem} \label{thm:eth}
For a traceless local operator $A$ with $\|A\|=1$, let
\begin{equation}
f(x):=\tr(HA)x/\tr(Hh).
\end{equation}
Assumption \ref{ass} implies that
\begin{equation}
R_f(N)=O(1/N).
\end{equation}
\end{theorem}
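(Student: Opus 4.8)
The plan is to prove the equivalent statement $r_f(N)=O(1/N)$; since $R_f(N)=\sup_{n\ge N}r_f(n)$ is nonincreasing in $N$, a bound $r_f(n)\le C/n$ for large $n$ gives $R_f(N)\le C/N$ at once. Write $x_j:=E_j/N$, set $\psi_N:=g_N-f$ on $[-\epsilon,\epsilon]$, and split the index set into a bulk $\mathcal B:=\{j:|E_j|\le N\epsilon\}$ and its complement. I will use three facts about the energy distribution. First, by Lemma \ref{l:1} the quantity $\tr(Hh)/d=\tr(H^2)/(Nd)$ is a positive $N$-independent constant, so $\tr(H^2)/d=\Theta(N)$; consequently the energy-density moments $m_p:=\frac1d\sum_j x_j^p$ satisfy $m_1=\tr(H)/(Nd)=0$ exactly and $m_2=\Theta(1/N)$. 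Second, the same local counting used in Lemma \ref{l:1} gives $\tr(H^4)/d=O(N^2)$, whence $m_4=O(1/N^2)$ and, by the power-mean inequality, $\frac1d\sum_j|x_j|^3=O(N^{-3/2})$. Third, a standard large-deviation estimate for local Hamiltonians (e.g.\ via $\frac1d\tr e^{tH}\le e^{O(t^2N)}$) shows that the fraction of eigenstates outside the bulk is $e^{-\Omega(N)}$, so the truncated moments $\tilde m_p:=\frac1d\sum_{j\in\mathcal B}x_j^p$ differ from $m_p$ by $e^{-\Omega(N)}$. In $r_f(N)^2$ the tail contributes at most $e^{-\Omega(N)}$, since $|\langle j|A|j\rangle|\le1$ and $|f(x_j)|\le|c|=O(1)$ there (with $c:=\tr(HA)/\tr(Hh)$); in the bulk, Assumption \ref{ass} lets me replace $\langle j|A|j\rangle$ by $g_N(x_j)$ at cost $O(1/\poly N)$. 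Thus $r_f(N)^2=\frac1d\sum_{j\in\mathcal B}|\psi_N(x_j)|^2+O(1/\poly N)$.

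The engine of the argument is a pair of exact identities: $\frac1d\sum_j\langle j|A|j\rangle=\tr(A)/d=0$ and $\frac1d\sum_j E_j\langle j|A|j\rangle=\tr(HA)/d$. Substituting the ETH ansatz, discarding the exponentially small tail, and using $\frac1d\sum_jf(x_j)=c\,m_1=0$ together with $\frac1d\sum_jE_jf(x_j)=\frac cN\tr(H^2)/d=\tr(HA)/d$, I obtain the two relations $\frac1d\sum_{j\in\mathcal B}\psi_N(x_j)=O(1/\poly N)$ and $\frac1d\sum_{j\in\mathcal B}x_j\psi_N(x_j)=O(1/\poly N)$. The second equality above---where the tuned slope $c=\tr(HA)/\tr(Hh)$ enters through $\tr(Hh)=\tr(H^2)/N$---is precisely the point of the definition of $f$: it forces the energy-weighted first moment of the deviation $\psi_N$ to vanish.

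Finally I Taylor-expand $\psi_N(x)=a_0+a_1x+r(x)$ with $a_0=g_N(0)$, $a_1=g_N'(0)-c$, and $|r(x)|\le C|x|^2$ on $[-\epsilon,\epsilon]$, where $C=\tfrac12\sup_{|x|\le\epsilon}|g_N''(x)|=O(1)$ by the smoothness hypothesis on $g_N$. Inserting this into the two relations and using $\tilde m_0=1-e^{-\Omega(N)}$, $\tilde m_1=e^{-\Omega(N)}$, $\tilde m_2=\Theta(1/N)$, $|\frac1d\sum_{\mathcal B}r(x_j)|\le C\tilde m_2=O(1/N)$, and $|\frac1d\sum_{\mathcal B}x_jr(x_j)|\le C\frac1d\sum_j|x_j|^3=O(N^{-3/2})$ yields a nearly diagonal $2\times2$ linear system whose solution is $|a_0|=O(1/N)$ and $|a_1|=O(N^{-3/2})/\tilde m_2=O(N^{-1/2})$; the mutual dependence of the two estimates is harmless because the off-diagonal coupling is $e^{-\Omega(N)}$. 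Then $(|a_0|+|a_1||x|+Cx^2)^2\le3(|a_0|^2+|a_1|^2x^2+C^2x^4)$ gives $\frac1d\sum_{j\in\mathcal B}|\psi_N(x_j)|^2\le3\big(|a_0|^2+|a_1|^2m_2+C^2m_4\big)=O(1/N^2)$, and hence $r_f(N)^2=O(1/N^2)$.

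I expect the crux to be the sharp estimate $|a_1|=O(1/\sqrt N)$: the weak ETH (Lemma \ref{weaketh}) and the vanishing of the zeroth moment alone would only give $|a_1|=O(1)$ and thus the trivial bound $r_f(N)=O(1/\sqrt N)$. Upgrading this requires simultaneously (i) the exact cancellation built into the choice of $f$, which kills the first energy-weighted moment of $\psi_N$, and (ii) the concentration of the energy distribution at scale $\sqrt N$, quantified through $\tr(H^4)/d=O(N^2)$ and the resulting $\frac1d\sum_j|x_j|^3=O(N^{-3/2})$, which makes the quadratic Taylor remainder small enough to leave room for the $1/N$ bound. I would also flag the implicit regularity requirement $\sup_{|x|\le\epsilon}|g_N''(x)|=O(1)$ as the precise sense in which the $g_N$ must be ``uniformly smooth'' for the theorem to hold.
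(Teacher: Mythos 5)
Your proof is correct, and its skeleton matches the paper's: both arguments rest on the two exact identities $\tr A=0$ and $\frac{1}{d}\tr(HA)=\frac{1}{d}\sum_jE_j\langle j|A|j\rangle$, substitute the ETH ansatz in the bulk of the spectrum, Taylor-expand the deviation $g_N-f$ around zero energy density, and control everything with moment bounds and eigenvalue concentration. The genuine difference is in the key intermediate estimate. The paper (Lemma \ref{l:8}) proves the sharper bound $g_N'(0)-\tr(HA)/\tr(Hh)=O(1/N)$, which requires expanding $g_N$ to \emph{second} order and invoking the identity $\tr(H^3)/(Nd)=\tr(H^2h)/d$ to absorb the $g_N''(0)$ term; i.e., it needs a controlled third-order Taylor remainder. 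You expand only to first order, obtain the weaker slope bound $|a_1|=O(N^{-1/2})$ from the third absolute moment $\frac{1}{d}\sum_j|E_j/N|^3=O(N^{-3/2})$, and correctly observe that this suffices because in $r_f(N)^2$ the slope error is weighted by $m_2=\Theta(1/N)$, so $|a_1|^2m_2=O(1/N^2)$. What each approach buys: the paper's route yields the independently interesting statement that the ETH slope converges at rate $1/N$, and it uses smoothness of $g_N$ only in the shrinking window $|x|\le\Lambda/N=O(\sqrt{\log N/N})$ (its truncation is at $\Lambda=C\sqrt{N\log N}$, with polynomially small tails, rather than your $N\epsilon$ with exponentially small tails); your route needs one less order of differentiability but instead the uniform bound $\sup_{|x|\le\epsilon}|g_N''(x)|=O(1)$ over the whole fixed interval --- a requirement you rightly flag, and which is a defensible reading of the loosely stated Assumption \ref{ass}, whose proof in the paper likewise needs uniformity (in $N$ and $j$) of its $O(|E_j|^3/N^3)$ remainder. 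Your coupled $2\times2$ system versus the paper's two decoupled moment estimates is cosmetic: your off-diagonal entries are $e^{-\Omega(N)}$ (and the system is rigorously solvable since its determinant is $\Theta(1/N)$), while in the paper the coupling vanishes exactly because $\tr H=0$.
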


\section{Conclusion} \label{sec:con}

In summary, we have proposed a definition of the convergence rate of EEV in the thermodynamic limit (Definition \ref{def}). The weak ETH (Lemma \ref{weaketh}) implies that $R_{f(x)=0}(N)=O(1/\sqrt N)$. If an optimal $\hat f$ exists, then we obtain an upper bound $R_{\hat f}(N)=O(1/\sqrt N)$. Although $R_{\hat f}(N)$ can be identically zero for certain local operators (Example \ref{exa}), we have proved that for almost every local operator, the lower bound $R_f(N)\ge r_f(N)=\Omega(1/N)$ holds for any smooth function $f$ including the optimal $\hat f$ (Theorem \ref{cor}). These results apply to all TI systems in any spatial dimension, regardless of the integrability or chaoticity of the model. In systems satisfying the (strong) ETH (Assumption \ref{ass}), we have proved that the aforementioned lower bound is tight (Theorem \ref{thm:eth}).

An open question is whether the gap between our lower and upper bounds on $R_{\hat f}(N)$ can be reduced or even closed without assuming the ETH (\ref{eq:ass}). To this end, it would be instructive to study $R_{\hat f}(N)$ in (integrable) free-fermion systems, which can be diagonalized analytically and efficiently simulated numerically. We conjecture that in any TI system, $R_{\hat f}(N)=\Theta(1/N)$ for almost every local operator.

\section*{Acknowledgments}

We would like to thank Fernando G.S.L. Brand\~ao, Xie Chen, and Yong-Liang Zhang for collaboration on a related project \cite{HBZ19}. This work was supported by NSF grant PHY-1818914 and a Samsung Advanced Institute of Technology Global Research Partnership.

\appendix

\section{Proof of Lemma \ref{thm}} \label{app}

\begin{lemma} [moments \cite{HBZ19}] \label{l:moment}
For any integer $m\ge0$,
\begin{equation} \label{moment}
\frac1d\sum_jE_j^{2m}=\frac1d\tr(H^{2m})=\Theta(N^m).
\end{equation}
\end{lemma}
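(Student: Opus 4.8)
The first equality in (\ref{moment}) is immediate, since $\{E_j\}$ is precisely the spectrum of $H$ and hence $\sum_j E_j^{2m}=\tr(H^{2m})$; the case $m=0$ is trivial. What remains is to establish matching upper and lower bounds of order $N^m$ on the normalized trace $\frac1d\tr(H^{2m})$, and I would treat these by entirely different means.

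For the lower bound I would exploit convexity rather than any explicit expansion. Viewing $j\mapsto E_j^2$ as a nonnegative random variable under the uniform distribution that assigns weight $1/d$ to each eigenstate, Jensen's inequality applied to the convex map $x\mapsto x^m$ on $[0,\infty)$ gives
\[
\frac1d\sum_j E_j^{2m}=\frac1d\sum_j (E_j^2)^m\ge\Big(\frac1d\sum_j E_j^2\Big)^m=\Big(\frac1d\tr(H^2)\Big)^m.
\]
By the final statement of Lemma \ref{l:1}, $\frac1d\tr(H^2)=N\,\tr(Hh)/d$ with $\tr(Hh)/d$ a positive $N$-independent constant, so $\frac1d\tr(H^2)=\Theta(N)$ and the right-hand side is $\Omega(N^m)$. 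This disposes of the lower bound with essentially no computation.

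For the upper bound I would expand $H^{2m}=\sum_{l_1,\ldots,l_{2m}}H_{l_1}\cdots H_{l_{2m}}$ and count the tuples whose normalized trace does not vanish, using the uniform magnitude bound $\big|\frac1d\tr(H_{l_1}\cdots H_{l_{2m}})\big|\le\prod_i\|H_{l_i}\|=1$ (which follows from $|\tr M|\le d\|M\|$ together with submultiplicativity of the operator norm). The key structural point is that if the support of some factor $H_{l_i}$ is disjoint from the supports of all the others, then, since operators on disjoint sites commute, the trace factorizes and vanishes because $\tr H_{l_i}=0$. Organizing the indices $\{1,\ldots,2m\}$ into the connected components of the graph whose edges join factors with overlapping supports, every component must therefore contain at least two factors, so there are at most $m$ components. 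A nonvanishing tuple is then specified by an anchor position for each component (at most $N^m$ choices in total), together with the $O(1)$ relative offsets within each component, whose span is $O(mk)=O(1)$; the number of admissible index partitions and offset patterns is a constant depending only on $m$, $k$, and $d_{\rm loc}$. Summing, the number of nonvanishing tuples is $O(N^m)$, whence $\frac1d\tr(H^{2m})=O(N^m)$, and combining with the lower bound yields $\Theta(N^m)$.

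The main obstacle is the bookkeeping in the upper bound: making rigorous that each connected cluster of overlapping local terms contributes exactly one free factor of $N$ (the anchor) while its internal geometry costs only $O(1)$, and that fewer than $m$ clusters is impossible, so the maximal exponent is exactly $m$. I expect the cleanest route to be a generalized-Pauli expansion of each $H_l$, in which the nonvanishing-trace condition becomes the statement that the constituent Pauli strings multiply to the identity; this makes both the connectivity requirement and the bounded-span estimate within each cluster transparent. The lower bound, by contrast, needs nothing beyond Jensen's inequality and the second-moment estimate already contained in Lemma \ref{l:1}.
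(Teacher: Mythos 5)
Your proposal is correct, but it splits the work differently from the paper, and the split is instructive. The paper's proof is a single terse counting argument: expand $H$ in the generalized Pauli basis, count the products of Pauli strings in the expansion of $H^{2m}$ whose trace does not vanish ($\Theta(N^m)$ of them, each with trace of magnitude $\Theta(d)$), and read off both bounds from the count, with details deferred to \cite{HBZ19}. Your upper bound is the same idea in different clothing: you count nonvanishing tuples $H_{l_1}\cdots H_{l_{2m}}$ by grouping factors into connected clusters of overlapping supports (each cluster must contain at least two factors, hence at most $m$ clusters, hence at most $O(N^m)$ anchor choices) and bound each surviving trace by the product of operator norms; you even note that the Pauli-basis expansion is the cleanest bookkeeping device, which is exactly the paper's choice. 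Where you genuinely depart from the paper is the lower bound: Jensen's inequality applied to $x\mapsto x^m$ reduces everything to the second moment $\tr(H^2)/d=N\tr(Hh)/d=\Theta(N)$ supplied by Lemma \ref{l:1}. This is a real improvement in self-containedness, because a count of nonvanishing terms, each of magnitude $\Theta(d)$, does not by itself yield a lower bound --- the individual traces carry signs and could in principle cancel, a subtlety the paper's one-line proof leaves to the cited reference (where one must isolate dominant, positively contributing pairing terms). Your convexity argument sidesteps the cancellation issue entirely, at the negligible cost of invoking the already-established Lemma \ref{l:1}; the counting is then needed only for the upper bound, where taking absolute values is harmless.
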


\begin{proof}
Expanding $H$ in the generalized Pauli basis, we count the number of terms that do not vanish upon taking the trace in the expansion of $H^{2m}$. There are $\Theta(N^m)$ such terms, the trace of each of which is $\Theta(d)$. Therefore, we obtain Eq. (\ref{moment}).
\end{proof}

This lemma implies that
\begin{multline} \label{eq:trunc}
\frac1d\sum_j|E_j|^m\big|\langle j|A|j\rangle-f(E_j/N)\big|\le\sqrt{\frac1d\sum_jE_j^{2m}\times\frac1d\sum_j\big|\langle j|A|j\rangle-f(E_j/N)\big|^2}\\
=O\big(N^{m/2}r_f(N)\big).
\end{multline}

Almost all eigenstates have vanishing energy density:
\begin{lemma} [concentration of eigenvalues \cite{Ans16}] \label{Mar}
For any $\epsilon>0$,
\begin{equation}
\big|\{j:|E_j|\ge N\epsilon\}\big|/d=e^{-\Omega(N\epsilon^2)}.
\end{equation}
\end{lemma}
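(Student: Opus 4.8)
The statement is a large-deviation (concentration) bound for the spectral measure of the local Hamiltonian $H$, so the natural tool is the exponential-moment (Chernoff) method applied to $H$. The plan is to first establish a sub-Gaussian bound on the normalized partition function,
\begin{equation}
\frac{1}{d}\tr(e^{tH})\le e^{cNt^2}\quad\text{for }|t|\le t_0,
\end{equation}
with $N$-independent constants $c,t_0>0$, and then convert it into the tail estimate by Markov's inequality.

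Granting the sub-Gaussian bound, the second step is routine. For $t>0$,
\begin{equation}
\frac{1}{d}\big|\{j:E_j\ge N\epsilon\}\big|\le e^{-tN\epsilon}\,\frac{1}{d}\tr(e^{tH})\le e^{-tN\epsilon+cNt^2}.
\end{equation}
Since each term has unit norm and there are $N$ of them, $\|H\|\le N$, so the event is empty for $\epsilon\ge1$ and we may assume $0<\epsilon<1$. Choosing $t=\min\{\epsilon/(2c),t_0\}$ makes the exponent $-\Omega(N\epsilon^2)$: when $\epsilon\le 2ct_0$ the optimum $t=\epsilon/(2c)$ gives $-N\epsilon^2/(4c)$, and otherwise $t=t_0$ gives the even stronger $-\Omega(N\epsilon)\le-\Omega(N\epsilon^2)$. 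Applying the same argument to $-H$ bounds the lower tail $E_j\le-N\epsilon$, and the two tails combine to give $\big|\{j:|E_j|\ge N\epsilon\}\big|/d=e^{-\Omega(N\epsilon^2)}$.

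The heart of the proof, and the step I expect to be the main obstacle, is the sub-Gaussian bound itself. Writing $F(t):=\log\tr(e^{tH})$, one has $F(0)=\log d$, $F'(0)=\tr(H)/d=0$ (since $\tr h=0$), and $F''(t)=\mathrm{Var}_{\rho_t}(H)$ with $\rho_t:=e^{tH}/\tr(e^{tH})$. It therefore suffices to prove the uniform variance bound $\mathrm{Var}_{\rho_t}(H)\le cN$ for $|t|\le t_0$, since then $F(t)-\log d\le\tfrac12 cNt^2$ by Taylor's theorem. I would expand the variance as $\sum_{l,l'}\mathrm{Cov}_{\rho_t}(H_l,H_{l'})$ and bound each covariance by its exponential decay in the distance between $\supp H_l$ and $\supp H_{l'}$. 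At $t=0$ this is immediate: $\rho_0=I/d$ and $\mathrm{Cov}_{I/d}(H_l,H_{l'})=\tr(H_lH_{l'})/d$ vanishes whenever the supports are disjoint (because $\tr H_l=0$), so only the $O(N)$ overlapping pairs contribute, consistent with $\tr(H^2)/d=\Theta(N)$ (Lemma \ref{l:moment} at $m=1$). Extending this to the finite-temperature state $\rho_t$ with $|t|\le t_0$ requires exponential clustering of correlations in the high-temperature regime, obtainable from the convergence of the high-temperature (cluster) expansion; the technical work is in making the clustering estimate uniform over the small range of $t$ so that each $l$ contributes only $O(1)$ appreciable $l'$.

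A purely combinatorial alternative to the clustering argument is to bound the moments $|\tr(H^m)/d|$ directly, by counting the connected configurations of the supports $\supp H_{l_1},\dots,\supp H_{l_m}$ that survive the trace, aiming for an estimate of the form $|\tr(H^m)/d|\le(CN)^{\lceil m/2\rceil}\,m!/(\text{combinatorial factor})$, and then summing $\sum_m t^m\tr(H^m)/(m!\,d)$ to an exponential of order $cNt^2$. This route avoids finite-temperature correlation bounds but concentrates all the difficulty into getting the factorial factors right: Lemma \ref{l:moment} alone only pins down the $N$-dependence $\Theta(N^m)$ and not the $m$-dependence, and it is precisely the Gaussian-type growth in $m$ that distinguishes a genuine sub-Gaussian tail from the weaker moment information.
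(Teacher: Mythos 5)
The paper never proves this lemma---it is imported verbatim from \cite{Ans16}---so there is no internal argument to compare against; judged on its own, your proposal is a correct and standard strategy. The Chernoff reduction is fine (including the restriction to $0<\epsilon<1$ via $\|H\|\le N$ and the two-sided tail), and you correctly locate all of the real difficulty in the sub-Gaussian bound $\tr(e^{tH})/d\le e^{cNt^2}$, which you leave at sketch level. Of your two sketches, the combinatorial one is both closer to the paper's own toolkit and, I believe, to the spirit of the cited proof: the proof of Lemma \ref{l:moment} is exactly the $m=O(1)$ case of the counting you need, and its $m$-refinement is a pairing estimate of the form $\tr(H^{2m})/d\le(2m-1)!!\,(CkN)^m\le(C'Nm)^m$ for $m\le N/(Ck)$, since any index whose support is disjoint from all the others factors out a vanishing trace, so the surviving index tuples organize into clusters of size at least two, at most $m$ clusters, each with one free position ($N$ choices) and the rest pinned to within $O(mk)$. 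Moreover, once you have $(C'Nm)^m$ you can skip resumming the exponential series entirely: Markov applied to the $2m$-th moment gives $\big|\{j:|E_j|\ge N\epsilon\}\big|/d\le(C'Nm)^m/(N\epsilon)^{2m}=\big(C'm/(N\epsilon^2)\big)^m$, and choosing $m=\Theta(N\epsilon^2)$ (the bound being trivial when $N\epsilon^2=O(1)$) yields $e^{-\Omega(N\epsilon^2)}$ directly, sidestepping the convergence bookkeeping you worried about. By contrast, your first route---uniform exponential clustering of $\rho_t=e^{tH}/\tr(e^{tH})$ for $|t|\le t_0$ via high-temperature cluster expansion---is also viable (and the identity $F''(t)=\mathrm{Var}_{\rho_t}(H)$ is unproblematic here since $\rho_t$ is a function of $H$ alone), but it imports machinery considerably heavier than the maximally mixed state requires: the lemma concerns the spectral measure against $\tr(\cdot)/d$, for which the purely combinatorial moment argument already suffices.
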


This lemma allows us to upper bound the total contribution of all eigenstates away from the middle of the spectrum. Let $C=O(1)$ be a sufficiently large constant such that
\begin{equation} \label{tail}
\frac1d\sum_{j:|E_j|\ge\Lambda}|E_j|^m\le q,\quad\Lambda:=C\sqrt{N\log N},\quad q:=1/\poly(N)
\end{equation}
for $m=0,1,2$, where $\poly(N)$ denotes a polynomial of sufficiently high degree in $N$.

Lemmas \ref{l:moment}, \ref{Mar}, and inequality (\ref{tail}) are related to the fact that $E_j$'s approach a normal distribution in the thermodynamic limit $N\to+\infty$ \cite{KLW15, BC15}. Indeed, $|E_j|=\Theta(\sqrt N)$ for almost all $j$.

For notational simplicity, let $x\overset{\delta}=y$ denote $|x-y|\le\delta$.

\begin{lemma}
For a smooth function $f$, if there exists a strictly increasing infinite sequence $\{N_i\}$ of positive integers such that $r_f(N)=o(1)$ for $N\in\{N_i\}$, then
\begin{equation} \label{zero}
f(0)=0.
\end{equation}
\end{lemma}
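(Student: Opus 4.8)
The goal is to extract the single number $f(0)$ from the hypothesis, so the plan is to show that the weighted mean square of $f(E_j/N)$ is simultaneously forced to vanish and forced to be close to $|f(0)|^2$. First I would remove the eigenstate expectation values from the picture. Applying Minkowski's inequality to the weighted norm $\sqrt{d^{-1}\sum_j|\cdot|^2}$,
\begin{equation*}
\sqrt{\frac{1}{d}\sum_j\big|f(E_j/N)\big|^2}\le r_f(N)+\sqrt{\frac{1}{d}\sum_j\big|\langle j|A|j\rangle\big|^2}.
\end{equation*}
The first term is $o(1)$ along $\{N_i\}$ by hypothesis, and the second is $O(1/\sqrt N)$ by the weak ETH (Lemma \ref{weaketh}). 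Hence $\frac{1}{d}\sum_j|f(E_j/N)|^2=o(1)$ along $\{N_i\}$.

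Next I would argue that this mean-square quantity is in fact close to $|f(0)|^2$, using the concentration of eigenvalues. Split the index set at the cutoff $\Lambda=C\sqrt{N\log N}$ of Eq.\ (\ref{tail}). The tail $|E_j|\ge\Lambda$ carries a fraction $d^{-1}|\{j:|E_j|\ge\Lambda\}|\le q=1/\poly(N)$ of the states (the $m=0$ case of Eq.\ (\ref{tail})), and since $|f|\le1$ its contribution to $\frac{1}{d}\sum_j|f(E_j/N)|^2$ is at most $1/\poly(N)=o(1)$. For the bulk $|E_j|<\Lambda$ the energy density is tiny: $|E_j/N|<\Lambda/N=C\sqrt{\log N/N}\to0$. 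The smoothness of $f$ at the origin then gives, uniformly over the bulk, $f(E_j/N)=f(0)+O(\Lambda/N)$, whence $|f(E_j/N)|^2=|f(0)|^2+O(\sqrt{\log N/N})$. Summing over the bulk (a fraction $1-1/\poly(N)$ of the states) yields $\frac{1}{d}\sum_{|E_j|<\Lambda}|f(E_j/N)|^2=|f(0)|^2+o(1)$, and combining with the tail estimate gives $\frac{1}{d}\sum_j|f(E_j/N)|^2=|f(0)|^2+o(1)$. Since the left-hand side is $o(1)$ along $\{N_i\}$ while $|f(0)|^2$ is an $N$-independent constant, letting $N\to\infty$ through $\{N_i\}$ forces $|f(0)|^2=0$, i.e.\ $f(0)=0$.

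The main obstacle, such as it is, lies in the bulk estimate: one must ensure that the error in replacing $f(E_j/N)$ by $f(0)$ vanishes uniformly in $j$ as $N\to\infty$. This is exactly what the sub-linear growth $\Lambda=o(N)$ guarantees, since every bulk energy density then lies in a shrinking neighborhood of $0$ on which the first-order Taylor remainder of $f$ is controlled. The only other point requiring care is that $f(0)$ is a fixed constant independent of $N$; this is precisely what permits passing to the limit along the subsequence $\{N_i\}$ and concluding that the constant itself must be zero.
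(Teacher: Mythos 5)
Your proof is correct, but it follows a genuinely different route from the paper's. The paper never considers the mean square of $f$: it starts from the exact first-moment identity $0=\tr A/d=\frac1d\sum_j\langle j|A|j\rangle$, replaces $\langle j|A|j\rangle$ by $f(E_j/N)$ on the bulk at cost $r_f(N)$ (the $m=0$ case of Eq.\ (\ref{eq:trunc}), i.e.\ Cauchy--Schwarz against the moment bound of Lemma \ref{l:moment}), Taylor expands $f$ to second order, and evaluates the resulting sums exactly using $\tr H=0$ and $\tr(H^2)=N\tr(Hh)$, arriving at the quantitative relation $0=f(0)+f''(0)\tr(Hh)/(2Nd)$ up to errors $O(r_f(N)+q+\Lambda^3/N^3)$ (Eq.\ (\ref{zeroc})). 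You instead bound the second moment $\frac1d\sum_j|f(E_j/N)|^2$ by $o(1)$ via Minkowski plus the weak ETH (Lemma \ref{weaketh}), and then show by concentration (the $m=0$ case of Eq.\ (\ref{tail})) that this same quantity equals $|f(0)|^2+o(1)$. Both arguments are valid. Yours is more economical in what it demands of $f$: you only need $f(E_j/N)=f(0)+O(\Lambda/N)$ uniformly on the bulk, i.e.\ first-order control at the origin rather than the full second-order expansion, and you avoid all exact moment computations. What you give up is precision: your bulk replacement error $O(\Lambda/N)=O(\sqrt{\log N/N})$ swamps any finer rate, so your argument delivers only $f(0)=0$ and cannot be recycled, whereas the paper's Eq.\ (\ref{zeroc}) is reused verbatim later in the appendix to conclude $f''(0)=0$ under the stronger hypothesis $r_f(N)=o(1/N)$. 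Note also that your route imports Lemma \ref{weaketh} (and hence uses the tracelessness of $A$ only through that lemma), a tool the paper's appendix proof never invokes.
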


\begin{proof}
\begin{align} \label{zeroc}
&0=\frac1d\tr A=\frac1d\sum_j\langle j|A|j\rangle\overset{O(q)}=\frac1d\sum_{j:|E_j|<\Lambda}\langle j|A|j\rangle\overset{r_f(N)}=\frac1d\sum_{j:|E_j|<\Lambda}f(E_j/N)\nonumber\\
&\overset{O(\Lambda^3/N^3)}=\frac1d\sum_{j:|E_j|<\Lambda}f(0)+\frac{f'(0)E_j}N+\frac{f''(0)E_j^2}{2N^2}\overset{O(q)}=\frac1d\sum_jf(0)+\frac{f'(0)E_j}N+\frac{f''(0)E_j^2}{2N^2}\nonumber\\
&=f(0)+\frac{f''(0)\tr(H^2)}{2N^2d}=f(0)+\frac{f''(0)\tr(Hh)}{2Nd},
\end{align}
where we used inequalities (\ref{tail}), (\ref{eq:trunc}), and the Taylor expansion
\begin{equation} \label{taylor}
f(E_j/N)=f(0)+f'(0)E_j/N+f''(0)E_j^2/(2N^2)+O(E_j^3/N^3)
\end{equation}
in the steps marked with ``$O(q)$,'' ``$r_f(N)$,'' and ``$O(\Lambda^3/N^3)$,'' respectively. Since $r_f(N)=o(1)$ for $N\in\{N_i\}$, Eq. (\ref{zero}) follows by letting $N=N_i$ with $i\to+\infty$.
\end{proof}

\begin{lemma}
For a smooth function $f$, if there exists a strictly increasing infinite sequence $\{N_i\}$ of positive integers such that $r_f(N)=o(1/\sqrt N)$ for $N\in\{N_i\}$, then
\begin{equation} \label{deriv}
f'(0)=\tr(HA)/\tr(Hh).
\end{equation}
Note that the right-hand side of this equation is an $N$-independent constant (Lemma \ref{l:1}).
\end{lemma}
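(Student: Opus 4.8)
The plan is to run the same four-step argument used to establish $f(0)=0$ in the preceding lemma, but with an extra factor of $E_j$ inserted into every sum. Multiplying the diagonal matrix elements by $E_j$ and summing reproduces the $N$-independent constant $\tr(HA)/d$, whose leading behavior isolates $f'(0)$ rather than the constant term. Concretely, I would start from
\begin{equation}
\frac1d\tr(HA)=\frac1d\sum_jE_j\langle j|A|j\rangle
\end{equation}
(an $N$-independent constant by Lemma~\ref{l:1}, since $\langle j|HA|j\rangle=E_j\langle j|A|j\rangle$) and process the right-hand side through the four moves: (i) truncate the sum to $|E_j|<\Lambda$; (ii) replace $\langle j|A|j\rangle$ by $f(E_j/N)$; (iii) Taylor-expand $f$ about $x=0$; and (iv) re-extend the sum to all $j$.

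The error bookkeeping is the heart of the matter, and it differs from the $f(0)=0$ case precisely because of the extra weight $|E_j|=\Theta(\sqrt N)$. For step (i), the discarded tail $\frac1d\sum_{|E_j|\ge\Lambda}E_j\langle j|A|j\rangle$ is $O(q)$ by the inequality~(\ref{tail}) with $m=1$ together with $\|A\|=1$. For step (ii), the replacement error is bounded by $\frac1d\sum_j|E_j|\,\big|\langle j|A|j\rangle-f(E_j/N)\big|=O(N^{1/2}r_f(N))$ via Eq.~(\ref{eq:trunc}) with $m=1$; this is the step that consumes the hypothesis, since $r_f(N)=o(1/\sqrt N)$ makes it $o(1)$ along $\{N_i\}$. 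For step (iii) it suffices to expand to first order, $E_jf(E_j/N)=f(0)E_j+f'(0)E_j^2/N+O(|E_j|^3/N^2)$, whose remainder summed over $|E_j|<\Lambda$ is $O(\Lambda^3/N^2)=O\big((\log N)^{3/2}/\sqrt N\big)=o(1)$. For step (iv), re-adding the tail costs $O(q)$ by Eq.~(\ref{tail}) with $m=1,2$.

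After these reductions the surviving main term is $f(0)\,\tr H/d+f'(0)\,\tr(H^2)/(Nd)$. Here $\tr H=0$ (since $\tr h=0$) annihilates the constant term, and $\tr(H^2)/(Nd)=\tr(Hh)/d$ by Lemma~\ref{l:1}, so the main term equals $f'(0)\tr(Hh)/d$. Collecting everything gives $\tr(HA)/d=f'(0)\tr(Hh)/d+o(1)$ along $\{N_i\}$; letting $N=N_i\to+\infty$ and dividing by the positive constant $\tr(Hh)/d$ (Lemma~\ref{l:1}) yields Eq.~(\ref{deriv}).

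The step I expect to be most delicate is the error accounting in (ii)--(iii): the extra factor $E_j$ amplifies every error by roughly $\sqrt N$ relative to the $f(0)=0$ argument, so one must confirm that the cutoff $\Lambda=C\sqrt{N\log N}$ is still small enough to keep the first-order Taylor remainder $O(|E_j|^3/N^2)$ negligible, and that $r_f(N)=o(1/\sqrt N)$ is exactly the strength needed to defeat the replacement error. (Expanding $f$ to second order instead would merely contribute a harmless $f''(0)\tr(H^3)/(2N^2d)=O(1/N)$ term, since $\tr(H^3)/d=O(N)$ by the support-counting of Lemma~\ref{l:moment}.)
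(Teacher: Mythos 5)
Your proof is correct and follows essentially the same route as the paper's: the same four-step truncate/replace/expand/re-extend argument with the cutoff $\Lambda=C\sqrt{N\log N}$, the Cauchy--Schwarz bound (\ref{eq:trunc}) with $m=1$ consuming the hypothesis $r_f(N)=o(1/\sqrt N)$, and the identity $\tr(H^2)/(Nd)=\tr(Hh)/d$ closing the argument. The only (immaterial) difference is that you annihilate the $f(0)E_j$ term via $\tr H=0$, whereas the paper invokes $f(0)=0$ from the preceding lemma; both are valid.
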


\begin{proof}
\begin{multline} \label{derivc}
\frac1d\tr(HA)=\frac{1}{d}\sum_jE_j\langle j|A|j\rangle\overset{O(q)}=\frac1d\sum_{j:|E_j|<\Lambda}E_j\langle j|A|j\rangle\overset{O(\sqrt Nr_f(N))}=\frac1d\sum_{j:|E_j|<\Lambda}E_jf(E_j/N)\\
\overset{O(\Lambda^3/N^2)}=\frac1d\sum_{j:|E_j|<\Lambda}\frac{f'(0)E_j^2}N\overset{O(q)}=\frac1d\sum_j\frac{f'(0)E_j^2}N=\frac{f'(0)\tr(H^2)}{Nd}=\frac{f'(0)\tr(Hh)}d,
\end{multline}
where we used inequalities (\ref{tail}), (\ref{eq:trunc}), and the Taylor expansion (\ref{taylor}) in the steps marked with ``$O(q)$,'' ``$O(\sqrt Nr_f(N))$,'' and ``$O(\Lambda^3/N^2)$,'' respectively. Since $\sqrt Nr_f(N)=o(1)$ for $N\in\{N_i\}$, Eq. (\ref{deriv}) follows by letting $N=N_i$ with $i\to+\infty$.
\end{proof}

We are ready to prove Lemma \ref{thm}. Recalling Eq. (\ref{zeroc}) and using Lemma \ref{l:1},
\begin{equation}
r_f(N)+O(q)+O(\Lambda^3/N^3)\ge|f''(0)|\tr(Hh)/(2Nd)\implies f''(0)=0
\end{equation}
if $r_f(N)=o(1/N)$ for $N\in\{N_i\}$. Then,
\begin{multline} \label{2derivc}
\frac1d\tr(H^2A)=\frac1d\sum_jE_j^2\langle j|A|j\rangle\overset{O(q)}=\frac1d\sum_{j:|E_j|<\Lambda}E_j^2\langle j|A|j\rangle\overset{O(Nr_f(N))}=\frac1d\sum_{j:|E_j|<\Lambda}E_j^2f(E_j/N)\\
\overset{O(\Lambda^5/N^3)}=\frac1d\sum_{j:|E_j|<\Lambda}\frac{f'(0)E_j^3}N\overset{O(q)}=\frac1d\sum_j\frac{f'(0)E_j^3}N=\frac{f'(0)\tr(H^3)}{Nd}=\frac{\tr(HA)\tr(H^2h)}{d\tr(Hh)},
\end{multline}
where we used inequalities (\ref{tail}), (\ref{eq:trunc}), and Eq. (\ref{taylor}) in the steps marked with ``$O(q)$'', ``$O(Nr_f(N))$'', and ``$O(\Lambda^5/N^3)$,'' respectively. Since $Nr_f(N)=o(1)$ for $N\in\{N_i\}$, Eq. (\ref{wrong}) follows by letting $N=N_i$ with $i\to+\infty$.

\section{Proof of Theorem \ref{thm:eth}} \label{app:B}

\begin{lemma} \label{l:8}
For a traceless local operator $A$ with $\|A\|=1$, Assumption \ref{ass} implies that
\begin{gather}
g_N(0)=O(1/N),\label{eq:b1}\\
g_N'(0)=\tr(HA)/\tr(Hh)+O(1/N).\label{eq:b2}
\end{gather}
Note that the first term on the right-hand side of Eq. (\ref{eq:b2}) is an $N$-independent constant (Lemma \ref{l:1}).
\end{lemma}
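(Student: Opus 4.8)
The plan is to run the same moment method as in Appendix \ref{app}, but to stop at finite $N$ rather than sending $N\to+\infty$ along a subsequence: I will track every error term down to the $O(1/N)$ level and exploit the fact that Assumption \ref{ass} lets me replace $\langle j|A|j\rangle$ by $g_N(E_j/N)$ up to a per-term error $1/\poly(N)$ that can be taken as small as needed. The inputs are the zeroth and first moments $\frac1d\tr A=0$ and $\frac1d\tr(HA)$ (the latter an $N$-independent constant by Lemma \ref{l:1}), the moment estimates $\frac1d\tr(H^2)=N\tr(Hh)/d=\Theta(N)$, $\frac1d\tr(H^3)=N\tr(H^2h)/d=O(N)$ and $\frac1d\tr(H^4)=\Theta(N^2)$ (the first two following from $\tr(H^{m+1})=N\tr(H^mh)$ by translation invariance together with Lemma \ref{l:1}, the last from Lemma \ref{l:moment} with $m=2$), and the truncation bounds (\ref{tail}) and Lemma \ref{Mar}, extended to $m=3,4$.

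For $g_N(0)$ I would start from $0=\frac1d\tr A=\frac1d\sum_j\langle j|A|j\rangle$, truncate to the bulk $|E_j|<\Lambda=C\sqrt{N\log N}$ via Lemma \ref{Mar} (error $O(q)$), replace $\langle j|A|j\rangle$ by $g_N(E_j/N)$ using (\ref{eq:ass}) (total error $1/\poly(N)$), and Taylor expand $g_N$ to first order. The linear term carries $\frac1d\sum_jE_j=0$, while the second-order remainder summed over the bulk is $O\big(\frac1{N^2}\cdot\frac1d\tr(H^2)\big)=O(1/N)$, leaving $g_N(0)=O(1/N)$. For $g_N'(0)$ I would start from $\frac1d\tr(HA)=\frac1d\sum_jE_j\langle j|A|j\rangle$, truncate and substitute $g_N$ in the same way, but now Taylor expand $g_N$ to \emph{second} order. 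The linear term yields $g_N'(0)\cdot\frac1{Nd}\tr(H^2)=g_N'(0)\tr(Hh)/d$; the quadratic term is $\frac{g_N''(0)}{2N^2}\cdot\frac1d\tr(H^3)=O(1/N)$, since $\frac1d\tr(H^3)=O(N)$ and $g_N''(0)=O(1)$; and the third-order remainder, multiplied by $E_j$, sums to $O\big(\frac1{N^3}\cdot\frac1d\tr(H^4)\big)=O(1/N)$. Dividing by the positive constant $\tr(Hh)/d$ gives $g_N'(0)=\tr(HA)/\tr(Hh)+O(1/N)$. One bookkeeping point: the zeroth-moment step discards a tail contribution of size $\frac{g_N'(0)}{N}O(q)$, so I would carry out the first-moment computation first (which already delivers $g_N'(0)=O(1)$) and only then the zeroth-moment one.

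The step I expect to be the main obstacle is controlling the cubic and quartic contributions at the $O(1/N)$ level. A naive cutoff bound such as $\frac1d\sum_{|E_j|<\Lambda}|E_j|^3\le\Lambda^3$ (which is what a first-order expansion would force) gives only $O(\mathrm{polylog}\,N/\sqrt N)$, and even the sharp bound $\frac1d\sum_j|E_j|^3=\Theta(N^{3/2})$ gives only $O(1/\sqrt N)$; the $O(1/N)$ rate survives solely because the \emph{signed} moment $\frac1d\tr(H^3)=O(N)$ is far smaller than $\frac1d\sum_j|E_j|^3$, a cancellation reflecting the near-symmetry of the eigenvalue distribution, and because $\frac1d\tr(H^4)=\Theta(N^2)$ rather than $\Lambda^4=\Theta(N^2\log^2N)$. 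This is exactly why the expansion must be pushed to second order so that the odd term appears as a genuine moment. The second delicate ingredient is that $g_N''(0)=O(1)$ and the third-order Taylor remainder of $g_N$ are uniform in $N$; this uniform low-order smoothness is precisely the content of the final sentence of Assumption \ref{ass}, and without it the linear-in-$1/N$ conclusion would break down.
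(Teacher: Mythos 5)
Your proposal is correct and follows essentially the same route as the paper's proof: the same truncation to $|E_j|<\Lambda$ with error $O(q)$, the same substitution of $g_N$ via Assumption \ref{ass}, and the same Taylor expansions (first order for the zeroth moment, second order for the first moment) with the identical error estimates via $\tr(H^2)/d$, $\tr(H^3)/d=N\tr(H^2h)/d=O(N)$, and $\tr(H^4)/d=\Theta(N^2)$ from Lemma \ref{l:moment}. Your additional bookkeeping remarks (extending (\ref{tail}) to $m=3$, and relying on the uniform $O(1)$ bounds on $g_N'(0)$, $g_N''(0)$ from the smoothness clause of Assumption \ref{ass}) are sound refinements of details the paper leaves implicit.
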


\begin{proof} [Proof of Eq. (\ref{eq:b1})]
We perform a calculation similar to Eq. (\ref{zeroc}):
\begin{multline} \label{eq:27}
0=\frac1d\tr A=\frac{1}{d}\sum_j\langle j|A|j\rangle\overset{O(q)}=\frac1d\sum_{j:|E_j|<\Lambda}\langle j|A|j\rangle\overset{1/\poly(N)}=\frac1d\sum_{j:|E_j|<\Lambda}g_N(E_j/N)\\
\approx\frac1d\sum_{j:|E_j|<\Lambda}g_N(0)+\frac{g_N'(0)E_j}N\overset{O(q)}=\frac1d\sum_jg_N(0)+\frac{g_N'(0)E_j}N=g_N(0),
\end{multline}
where we used inequality (\ref{tail}), the ETH (\ref{eq:ass}), and the Taylor expansion
\begin{equation} \label{eq:tayloreth}
g_N(E_j/N)=g_N(0)+g_N'(0)E_j/N+g_N''(0)E_j^2/(2N^2)+O(E_j^3/N^3)
\end{equation}
in the steps marked with ``$O(q)$,'' ``$1/\poly(N)$,'' and ``$\approx$,'' respectively. The approximation error in the ``$\approx$'' step is
\begin{equation} \label{eq:29}
\frac1d\sum_{j:|E_j|<\Lambda}O(E_j^2/N^2)\le\frac{1}{d}\sum_{j}O(E_j^2/N^2)=\frac{O(\tr(H^2))}{N^2d}=\frac{O(\tr(Hh))}{Nd}=O(1/N).
\end{equation}
We obtain Eq. (\ref{eq:b1}) by combining Eq. (\ref{eq:27}) and inequality (\ref{eq:29}).
\end{proof}

\begin{proof} [Proof of Eq. (\ref{eq:b2})]
We perform a calculation similar to Eq. (\ref{derivc}):
\begin{align} \label{eq:30}
&\frac1d\tr(HA)=\frac1d\sum_jE_j\langle j|A|j\rangle\overset{O(q)}=\frac1d\sum_{j:|E_j|<\Lambda}E_j\langle j|A|j\rangle\overset{1/\poly(N)}=\frac1d\sum_{j:|E_j|<\Lambda}E_jg_N(E_j/N)\nonumber\\
&\approx\frac1d\sum_{j:|E_j|<\Lambda}E_jg_N(0)+\frac{g_N'(0)E_j^2}N+\frac{g_N''(0)E_j^3}{2N^2}\overset{O(q)}=\frac1d\sum_jE_jg_N(0)+\frac{g_N'(0)E_j^2}N+\frac{g_N''(0)E_j^3}{2N^2}\nonumber\\
&=\frac{g_N'(0)\tr(H^2)}{Nd}+\frac{g_N''(0)\tr(H^3)}{2N^2d}=\frac{g_N'(0)\tr(Hh)}d+\frac{g_N''(0)\tr(H^2h)}{2Nd}\nonumber\\
&=g_N'(0)\tr(Hh)/d+O(1/N),
\end{align}
where we used inequalities (\ref{tail}), (\ref{eq:ass}), and the Taylor expansion (\ref{eq:tayloreth}) in the steps marked with ``$O(q)$,'' ``$1/\poly(N)$,'' and ``$\approx$,'' respectively. The approximation error in the ``$\approx$'' step is
\begin{equation} \label{eq:31}
\frac1d\sum_{j:|E_j|<\Lambda}O(E_j^4/N^3)\le\frac1d\sum_jO(E_j^4/N^3)=O(1/N),
\end{equation}
where we used Eq. (\ref{moment}) with $m=2$. We obtain Eq. (\ref{eq:b2}) by combining Eq. (\ref{eq:30}) and inequality (\ref{eq:31}).
\end{proof}

We are ready to prove Theorem \ref{thm:eth}:
\begin{align}
&r_f^2(N)=\frac1d\sum_j\big|\langle j|A|j\rangle-f(E_j/N)\big|^2\overset{O(q)}=\frac1d\sum_{j:|E_j|<\Lambda}\big|\langle j|A|j\rangle-f(E_j/N)\big|^2\nonumber\\
&\overset{1/\poly(N)}=\frac{1}{d}\sum_{j:|E_j|<\Lambda}|g_N(E_j/N)-f(E_j/N)|^2\approx\frac1d\sum_{j:|E_j|<\Lambda}|g_N(0)+(g'_N(0)-f'(0))E_j/N|^2\nonumber\\
&\le\frac1d\sum_j\big(|g_N(0)|+|g'_N(0)-f'(0)|\cdot |E_j|/N\big)^2=O(1/N^2),
\end{align}
where we used inequalities (\ref{tail}), (\ref{eq:ass}), Eq. (\ref{eq:tayloreth}), and Lemma \ref{l:8} in the steps marked with ``$O(q)$,'' ``$1/\poly(N)$,'' ``$\approx$,'' and the last step, respectively. The approximation error in the ``$\approx$'' step is upper bounded by
\begin{multline}
O(1/d)\sum_{j:|E_j|<\Lambda}|g_N(0)|E_j^2/N^2+|g'_N(0)-f'(0)|\cdot|E_j|^3/N^3+E_j^4/N^4\\
=O(1/N^2+\Lambda^3/N^4+1/N^2)=O(1/N^2),
\end{multline}
where we used Lemmas \ref{l:moment}, \ref{l:8}.

Finally, it is easy to see that $r_f(N)=O(1/N)$ implies that $R_f(N)=O(1/N)$.

\begin{remark}
In a similar way, $R_{f(x)=0}(N)$ can be calculated using Lemma \ref{l:8}; see Lemma 2 in Ref. \cite{Hua22AP}.
\end{remark}

\printbibliography

\end{document}